\DeclareMathOperator*{\argmax}{arg\,max}
\newtheorem{thm}{Theorem}
\newtheorem{lem}[thm]{Lemma}
\newtheorem{cor}[thm]{Corollary}
\newcommand{\x}{{\bf x}}
\newcommand{\y}{{\bf y}}
\newcommand{\z}{{\bf z}}
\newcommand{\SW}{\operatorname{SW}}
\newcommand{\Opt}{\operatorname{OPT}}
\begin{document}
%	
%\maketitle
%
%
\title{Pirates in Wonderland:\\[0.1cm] Liquid Democracy has Bicriteria Guarantees}
\titlerunning{Liquid Democracy has Bicriteria Guarantees}
% If the paper title is too long for the running head, you can set
% an abbreviated paper title here
%
\author{Jonathan A. Noel\inst{1} \and Mashbat Suzuki\inst{2} \and 
	Adrian Vetta\inst{2}
	}
\authorrunning{J. Noel et al.}
% First names are abbreviated in the running head.
% If there are more than two authors, 'et al.' is used.
%
\institute{University of Victoria, Victoria, Canada \\
\email{noelj@uvic.ca}\\[0.2cm]	
\and
McGill University, Montreal, Canada \\
\email{mashbat.suzuki@mail.mcgill.ca, adrian.vetta@mcgill.ca}
}
\maketitle              % typeset the header of the contribution
\begin{abstract}
Liquid democracy has a natural graphical representation, the delegation graph.
Consequently, the strategic aspects of liquid democracy can be studied as a game
over delegation graphs, called the liquid democracy game. 
Our main result is that this game has bicriteria approximation guarantees, in terms
of both rationality and social welfare. Specifically, we prove the price of stability for $\epsilon$-Nash equilibria
is exactly~$\epsilon$ in the liquid democracy game. 
\end{abstract}

\section{Introduction}
Liquid democracy is a form of direct and representative democracy, based on the concept of {\em delegation}.
Each voter has the choice of voting themselves or transferring (transitively) its vote to a trusted proxy.
Recent interest in liquid democracy, from both practical and theoretical perspectives, was sparked 
by the Pirate Party in Germany and its Liquid Feedback platform~\cite{BKN14}.
Similar initiatives have subsequently been undertaken by the 
Demoex Party in Sweden, the Internet Party in Spain, and the Net Party in Argentina.

There are many potential benefits of a transitive delegation mechanism. 
Participation may improve in quantity for several reasons. The system is easy to use and understand, induces low barriers 
to participation, and is inherently egalitarian: there is no distinction between voters and representatives; every one
is both a voter and a delegator. Participation may also improve in quality due to the flexibility to choose different 
forms of participation: voters can chose to be active participants on topics they are comfortable with or delegate on topics
they are less comfortable with. Accountability may improve due to the transparent nature of the mechanism and 
because there is a demonstrable line of responsibility between a delegated proxy and its delegators.
The quality of decision making may improve via a specialization to delegated experts
and a reduction in induced costs, such as the duplication of resources.

Our objective here is not to evaluate such claimed benefits, but we refer the reader to~\cite{Beh17, BKN14, BZ16, For02, GA15} for detailed 
discussions on the motivations underlying liquid democracy.
Rather, our focus is to quantitatively measure the performance of liquid democracy in an idealized setting.
Specifically, can equilibria in these voting mechanisms provide high social welfare?
That is, we study the {\em price of stability of liquid democracy}.

\subsection{Background}
As stated, vote delegation lies at the heart of liquid democracy. Furthermore, 
vote delegation in liquid democracy has several fundamental characteristics: optionality, retractability, 
partitionability, and transitivity. So let us begin by defining these concepts and tracing their origins~\cite{Beh17,BKN14}.

The notion of {\em optional} delegation proffers voters the choice of direct participation (voting themselves/choosing to abstain)
or indirect participation (delegating their vote). This idea dates back over a century to the work of Charles Dodgson on parliamentary 
representation~\cite{Dod1884}.\footnote{Dodgson was a parson and 
a mathematician but, as the author of  ``Alice in Wonderland'', is more familiarly known by his {\em nom de plume}, Lewis Carroll.}

Miller~\cite{Mil69} proposed that delegations be {\em retractable} and {\em partitionable}.
The former allows for delegation assignments to be time-sensitive and reversible.
The latter allows a voter to select different delegates for different policy decisions.\footnote{This option is particularly useful 
where potential delegates may have assorted competencies. For example, Alice may prefer to delegate to the Hatter on
matters concerning tea-blending but to the Queen of Hearts on matters concerning horticulture.}
 
Finally, {\em transitive} delegation is due to Ford~\cite{For02}. This allows a proxy to themselves delegate its vote {\em and} 
all its delegated votes. This concept is central to liquid democracy. Indeed, if an agent is better served by delegating her vote to 
a more informed proxy it would be perverse to prohibit that proxy from re-delegating that vote to an even more informed proxy.  
Moreover, such transitivity is necessary should circumstances arise causing the proxy to be unable to vote. It also reduces the 
duplication of efforts involved in voting.

As noted in the sixties by Tullock~\cite{Tol67}, the development of the computer opened up the possibility of large proxy voting systems.
Indeed, with the internet and modern security technologies, liquid democracy is inherently practical; see Lumphier~\cite{Lan95}.

There has been a flurry of interest in liquid democracy from the AI community. This is illustrated by the large range of 
recent papers on the topic; see, for example, \cite{Bri18,  BT18, CG17, CMM17, GKM18, 
GA15, KMP18, KR20, ZZ17}. Most directly related to our work is the game theoretic model of liquid democracy 
studied by Escoffier et al.~\cite{EGP19}. (A related game-theoretic model was also investigated by Bloembergen et al.~\cite{BGL19}.) Indeed, our motivation is an open question posed by Escoffier et al.~\cite{EGP19}:
are price of anarchy type results obtainable for their model of liquid democracy?
We will answer this question for a generalization of their model. 

%accuracy of 0-1 voting BGL19
% price of anarchy of simple game in BGL19
%supervoters GKM18, KMP18
%dynamics EGP19
%axiomatic GA15
% binary aggregation CG17, BT,
%GA, CMM-- voting in a metric space
%rankings

\subsection{Contributions}

In Section~\ref{sec:model}, we will see that vote delegation has a natural representation in terms of a 
directed graph called the {\em delegation graph}. If each agent $i$ has a utility of $u_{ij}\in [0,1]$ when agent $j$ 
votes as her delegate then a game, called the {\em liquid democracy game}, is induced on the delegation graph.
We study the {\em welfare ratio} in the liquid democracy game, which compares the social welfare of an equilibrium 
to the welfare of the optimal solution.

Pure strategy Nash equilibria need not exist in the liquid democracy game, so we focus on mixed 
strategy Nash equilibria.
Our main result, given in Section~\ref{sec:bicriteria} is that bicriteria approximation guarantees (for social welfare and rationality) 
exist in the game.
\begin{thm}\label{thm:upper}
For all $\epsilon\in [0,1]$, and for any instance of the liquid democracy game, there exists an $\epsilon$-Nash equilibrium with social 
welfare at least $\epsilon\cdot \Opt$.
\end{thm}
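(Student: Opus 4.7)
The plan is to construct a mixed profile $\hat\sigma$ via a Kakutani fixed-point argument and then verify that it is both an $\epsilon$-Nash equilibrium and achieves social welfare at least $\epsilon \cdot \Opt$. Starting from a welfare-optimal pure profile $\sigma^*$, denote by $r^*(i)$ the representative of agent $i$ under $\sigma^*$, so that $r^*(i)=i$ precisely when $i$ self-votes in $\sigma^*$. I would then form the ``collapsed'' pure profile $\sigma^{**}$ in which each agent $i$ delegates directly to her optimal representative: $\sigma^{**}_i := r^*(i)$. Because every delegation chain under $\sigma^*$ is replaced by a single edge under $\sigma^{**}$, each agent's representative is unchanged, and hence $\SW(\sigma^{**}) = \Opt$.

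Next, consider the correspondence $\Phi$ that sends each mixed profile $\sigma$ to the set of profiles in which agent $i$ plays the pure action $\sigma^{**}_i$ with probability $\epsilon$ and some best response to $\sigma_{-i}$ with the remaining probability $1-\epsilon$:
\begin{equation*}
\Phi_i(\sigma) \;=\; \bigl\{ \epsilon \cdot \mathbf{1}_{\sigma^{**}_i} + (1-\epsilon) \cdot \mu_i \;:\; \mu_i \in \mathrm{BR}_i(\sigma_{-i}) \bigr\}.
\end{equation*}
Since $\mathrm{BR}_i$ is non-empty, convex-valued, and upper hemicontinuous on the compact convex set of mixed profiles, so is $\Phi$, and Kakutani's theorem yields a fixed point $\hat\sigma$ satisfying $\hat\sigma_i = \epsilon \cdot \mathbf{1}_{\sigma^{**}_i} + (1-\epsilon) \mu_i^*$ for some $\mu_i^* \in \mathrm{BR}_i(\hat\sigma_{-i})$. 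Using $U_i(\mu_i^*, \hat\sigma_{-i}) = \max_a U_i(a, \hat\sigma_{-i})$, a direct calculation gives
\begin{equation*}
\max_{a} U_i(a,\hat\sigma_{-i}) - U_i(\hat\sigma) \;=\; \epsilon \cdot \bigl(\max_{a} U_i(a,\hat\sigma_{-i}) - U_i(\sigma^{**}_i,\hat\sigma_{-i})\bigr) \;\leq\; \epsilon,
\end{equation*}
where the last step uses $u_{ij} \in [0,1]$. Hence $\hat\sigma$ is an $\epsilon$-Nash equilibrium.

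For the welfare analysis I would partition the agents into the set $S$ of self-voters under $\sigma^*$ and its complement. Since $\mu_i^*$ is a best response, $U_i(\hat\sigma) \geq U_i(\sigma^{**}_i, \hat\sigma_{-i})$ for every $i$. For $j \in S$ the action $\sigma^{**}_j = j$ is pure self-vote, so $U_j(\sigma^{**}_j, \hat\sigma_{-j}) = u_{jj}$ irrespective of the others. For $i \notin S$ the action $\sigma^{**}_i = r^*(i)$ lies in $S$, and agent $r^*(i)$ herself plays $\sigma^{**}_{r^*(i)} = r^*(i)$ (that is, self-votes) with probability at least $\epsilon$ under $\hat\sigma$, so conditioning on that single event yields $U_i(\sigma^{**}_i, \hat\sigma_{-i}) \geq \epsilon \cdot u_{i,r^*(i)}$. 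Combining these bounds with $u_{jj} \geq \epsilon \cdot u_{jj}$ for $j \in S$ and the decomposition $\Opt = \sum_{j \in S} u_{jj} + \sum_{i \notin S} u_{i,r^*(i)}$,
\begin{equation*}
\SW(\hat\sigma) \;\geq\; \sum_{j \in S} u_{jj} + \epsilon \sum_{i \notin S} u_{i,r^*(i)} \;\geq\; \epsilon \cdot \Opt.
\end{equation*}

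The main obstacle I anticipate is precisely this welfare estimate for non-self-voters: a version of the argument anchoring on $\sigma^*$ directly (rather than on the collapsed $\sigma^{**}$) would require every intermediate agent along a length-$k$ delegation chain to play her $\sigma^*$-component, yielding only an $\epsilon^k$-type bound instead of the desired linear rate. The role of the collapsing step $\sigma^{**}_i := r^*(i)$ is to shrink every chain to a single edge, so that only two agents (namely $i$ and her target $r^*(i)$) need to independently play their $\sigma^{**}$-components to deliver $i$'s optimal representative. Verifying that the collapse preserves $\Opt$, and that the two-level probability decomposition survives intact at the Kakutani fixed point, is where the delicate bookkeeping will live.
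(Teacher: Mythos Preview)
Your approach is essentially the paper's: collapse the optimal profile to a star forest, apply Kakutani to a best-response correspondence anchored at the optimal actions, and then verify the incentive and welfare bounds. The paper restricts only the gurus (your set $S$) to self-vote with probability at least $\epsilon$ while letting non-gurus play unrestricted best responses, whereas you anchor \emph{every} agent on her $\sigma^{**}$-action; both variants work, and the welfare computation is the same.

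One point needs correcting. The paper's $\epsilon$-Nash equilibrium is \emph{multiplicative}, $U_i(\hat\sigma) \geq (1-\epsilon)\cdot \max_{a} U_i(a,\hat\sigma_{-i})$, not additive, and your displayed inequality only yields the additive version. Happily, your construction gives the multiplicative bound more directly: by linearity in $i$'s own strategy,
\[
U_i(\hat\sigma) \;=\; \epsilon\, U_i(\sigma^{**}_i,\hat\sigma_{-i}) + (1-\epsilon)\, \max_a U_i(a,\hat\sigma_{-i}) \;\geq\; (1-\epsilon)\, \max_a U_i(a,\hat\sigma_{-i}),
\]
using only non-negativity of utilities (the hypothesis $u_{ij}\le 1$ is not even needed). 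Replace your additive calculation with this line and the proof matches the theorem as stated.
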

Theorem~\ref{thm:upper} is tight: the stated bicriteria guarantees cannot be improved.
\begin{thm}\label{thm:lower}
For all $\epsilon\in[0,1]$, there exist instances such that any $\epsilon$-Nash equilibrium has welfare at most $\epsilon\cdot \Opt +\gamma$ for any $\gamma>0$.
\end{thm}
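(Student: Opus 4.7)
The plan is to construct, for any fixed $\epsilon\in[0,1]$ and target $\gamma>0$, a family of instances parameterized by a large integer $n$ and a small $\delta>0$, in which the optimal welfare is achieved only by a configuration that is provably far from any $\epsilon$-Nash equilibrium. The construction follows the standard PoS lower-bound template: a key agent, whose vote is essential for the optimum, has an outside option whose value strictly exceeds the $\epsilon$-threshold.

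The instance has $n+2$ agents: followers $v_1,\ldots,v_n$, an expert $a$, and an auxiliary $w$. I would set $u_{v_i,a}=1$ and $u_{v_i,v_i}=\epsilon$ (all other $u_{v_i,\cdot}=0$) so that each follower strongly prefers $a$'s vote but has a modest self-vote option; $u_{a,a}=0$ and $u_{a,w}=\epsilon+\delta$ so that $a$ is indifferent to voting but gains $\epsilon+\delta$ by delegating to $w$; and $u_{w,w}=\epsilon$ with all other $u_{w,\cdot}=0$, so that $w$ strictly prefers to vote. First I would verify that $\Opt\ge n+\epsilon$, realized by the profile in which $a$ votes, every $v_i$ delegates directly to $a$, and $w$ votes.

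The core of the proof is then a short sequence of best-response arguments against any purported $\epsilon$-NE $\sigma$. I would first argue $w$ votes in $\sigma$, since any other choice loses her the $\epsilon$ self-vote payoff. Consequently the move ``$a$ delegates to $w$'' is always a non-cycle option for $a$ that yields her $\epsilon+\delta$; hence $a$ cannot vote in $\sigma$ (deviation gain would be $\epsilon+\delta>\epsilon$), and her best feasible choice is to delegate so that her chain terminates at $w$. Given that $a$'s final voter is $w$, any follower $v_i$ who delegates to $a$ receives $u_{v_i,w}=0$, while voting yields $\epsilon$; a small perturbation $u_{v_i,v_i}=\epsilon+\delta$ guarantees the deviation gain is strictly greater than $\epsilon$, so every follower votes. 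Plugging these forced choices in yields welfare $(\epsilon+\delta)+\epsilon+n(\epsilon+\delta)=(n+2)\epsilon+O(n\delta)$, whereas $\epsilon\cdot\Opt=\epsilon(n+\epsilon)$. Choosing $\delta$ small and $n$ large so that the residual additive error is below $\gamma$ completes the argument.

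The main obstacle is ruling out the ``good'' candidate $\epsilon$-NE in which $a$ votes and all followers delegate directly to $a$: in that profile every attempted deviation $a\to v_j$ creates a cycle and thus yields $0$, so $a$ has no profitable unilateral move. The entire role of the auxiliary $w$ is to foil this: because $w$'s chain is forced to terminate at herself, $a\to w$ is never cyclic, and the $\epsilon+\delta$ payoff kills the candidate $\epsilon$-NE. Getting the auxiliary calibrated so that (i) $w$ strictly votes, (ii) $a$'s deviation strictly exceeds $\epsilon$, and (iii) the constant welfare overhead contributed by $a$ and $w$ can be absorbed into $\gamma$ by taking $n$ large is the delicate part of the construction, but once these three conditions are in place, the bound $\welf(\sigma)\le \epsilon\cdot\Opt+\gamma$ for every $\epsilon$-NE $\sigma$ follows from the best-response chain above.
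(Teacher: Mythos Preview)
Your construction and argument have two intertwined problems that make the proof fail.

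\textbf{Wrong notion of $\epsilon$-NE.} Throughout you reason with an \emph{additive} $\epsilon$-equilibrium (``deviation gain would be $\epsilon+\delta>\epsilon$''), but the paper's definition is \emph{multiplicative}: $u_i(\x)\ge(1-\epsilon)\,u_i(\hat\x_i,\x_{-i})$. Under the multiplicative definition your calibration is unnecessary---any strictly positive outside option for $a$ already forces $x_{aa}\le\epsilon$---and, more importantly, it is harmful (see below). You also argue purely in pure strategies (``$w$ votes'', ``$a$ cannot vote'', ``every follower votes''), whereas $\epsilon$-NE are mixed; the correct statements are bounds on probabilities such as $x_{ww}\ge 1-\epsilon$ and $x_{aa}\le\epsilon$.

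\textbf{The additive error does not vanish.} Because you set $u_{w,w}=\epsilon$, $u_{a,w}=\epsilon+\delta$, and $u_{v_i,v_i}=\epsilon+\delta$, the pure profile in which $w$ votes, $a\to w$, and every follower self-votes is an \emph{exact} Nash equilibrium (hence an $\epsilon$-NE) with welfare $(n+1)(\epsilon+\delta)+\epsilon$. Since $\Opt=n+\epsilon$ for large $n$, the gap
\[
\SW-\epsilon\cdot\Opt \;=\; (n+1)(\epsilon+\delta)+\epsilon-\epsilon(n+\epsilon)
\;=\; 2\epsilon-\epsilon^2+(n+1)\delta
\]
is bounded below by $2\epsilon-\epsilon^2>0$ for every $\epsilon\in(0,1)$, \emph{regardless of how you choose $n$ and $\delta$}. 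So the instance does not witness the bound for any $\gamma<2\epsilon-\epsilon^2$, and your final sentence (``choosing $\delta$ small and $n$ large \ldots'') is simply incorrect arithmetic.

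The paper's construction avoids this by making the anchor utilities $\Theta(\delta)$ rather than $\Theta(\epsilon)$: agent~$1$ has $u_{11}=\delta$, agent~$2$ has $u_{21}=\delta$ and $u_{22}=0$, and the $n$ followers have $u_{i2}=1$ and $u_{ij}=0$ otherwise. Then in any (mixed, multiplicative) $\epsilon$-NE one gets $x_{11}\ge 1-\epsilon$ and hence $x_{22}\le\epsilon$, so each follower's expected utility is at most $x_{22}\le\epsilon$, and the total welfare is at most $2\delta+n\epsilon=\epsilon\cdot\Opt+2(1-\epsilon)\delta$. The entire residual error is $O(\delta)$ and can be driven below any $\gamma>0$.
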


Theorems~\ref{thm:upper} and~\ref{thm:lower} imply
that the {\em price of stability} for $\epsilon$-Nash equilibria is~$\epsilon$.
An important consequence of Theorem~\ref{thm:upper} is that strong approximation guarantees can {\em simultaneously}
be obtained for both social welfare and rationality.
Specifically, setting~$\epsilon=\frac12$ gives factor $2$ approximation guarantees for each criteria.
\begin{cor} \label{cor:upper} 
For any instance of the liquid democracy game, there exists a $\frac12$-Nash equilibrium with welfare at least~$\frac12\cdot \Opt$.
\end{cor}

\section{A Model of Liquid Democracy}\label{sec:model}

In this section, we present the liquid democracy game. This game generalizes the game-theoretic 
model studied by Escoffier et al.~\cite{EGP19}. 

\subsection{The Delegation Graph}\label{sec:delegation-graph}
In liquid democracy each agent has three strategies: she can abstain, vote herself, or delegate her 
vote to another agent. So we can represent an instance by a directed network $G=(V,A)$
called the {\em delegation graph}.
There is a vertex in $V=\{1,\cdots, n\}$ for each agent. 
To define the sets of arcs, there are three possibilities.
First, if agent $i$ votes herself the delegation graph contains a self-loop $(i,i)$.
Second, if agent $i$ delegates her vote to agent $j\neq i$ 
then there is an arc $(i,j)$ in $G$.
Third, if agent $i$ abstains then the vertex $i$ has out-degree zero.

Now, because the out-degree of each vertex is at most one, the delegation graph $G$ is a $1$-forest.  
That is, each component of $G$ is an arborescence plus at most one arc.
In particular, each component is either an arborescence and, thus, contains no cycle or 
contains exactly one directed cycle (called a {\em delegation cycle}).
In the former case, the component contains one sink node corresponding to an abstaining voter.
In the latter case, if the delegation cycle is a self-loop the component contains exactly one voter called
a {\em guru}; if the delegation cycle has length at least two then the component contains no voters. 
An example of a delegation graph is shown in Figure~\ref{fig:delegation-graph}.
\begin{figure}
	\begin{center}
	\includegraphics*[scale=0.13]{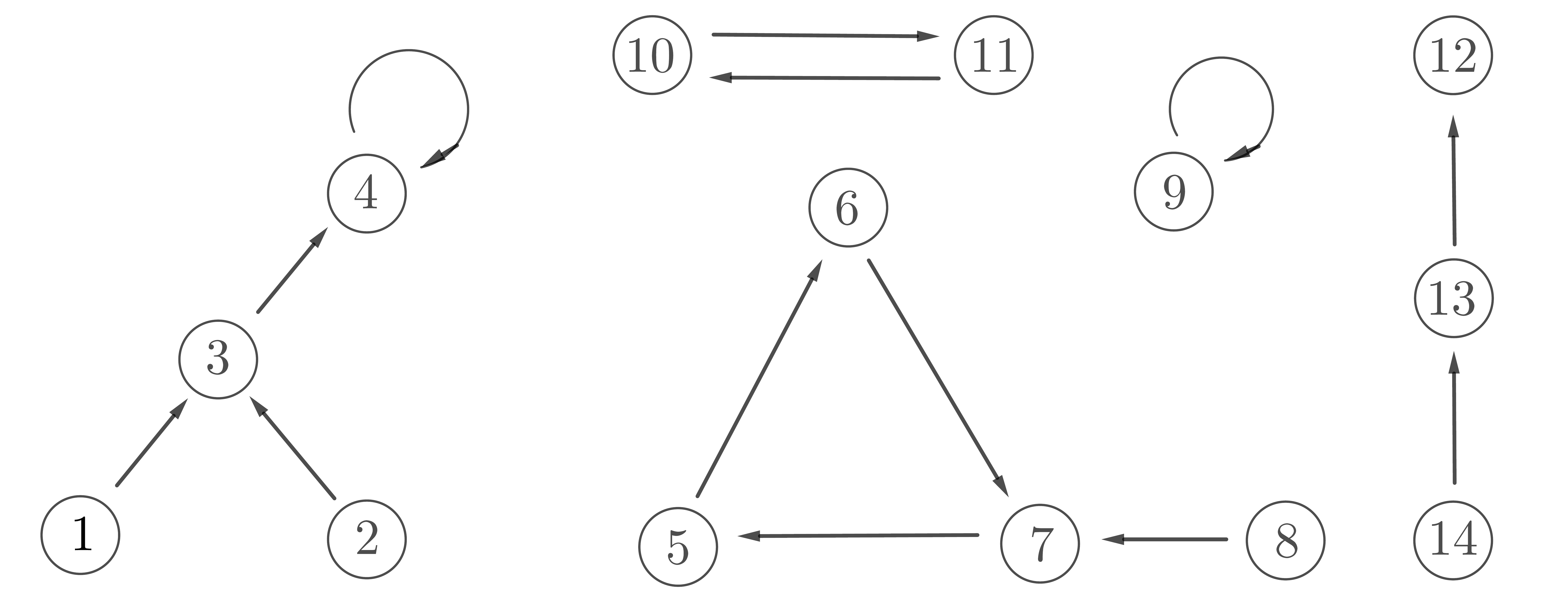}
	\end{center}	
	\caption{A delegation graph.}\label{fig:delegation-graph}	
\end{figure}

Observe that, by the transitivity of delegations, if an agent $i$ is in a component containing
a guru $g$ then that guru will cast a vote on $i$'s behalf. On the other hand, if agent $i$ is 
in a component without a guru (that is, with either a sink node or a cycle of length at least two)
then no vote will be cast on $i$'s behalf. 
We denote the guru $j$ representing agent $i$ by $g(i)=j$ if it exists (we write $g(i)=\emptyset$ otherwise).
Furthermore, its easy to find $g(i)$: simply apply path traversal starting at vertex $i$ in
the delegation graph. 

For example, in Figure~\ref{fig:delegation-graph} two components contain a guru. 
Agent $4$ is the guru of agents $1,2,3$ and itself; agent $9$ is the guru only for itself. 
The vertices in the remaining three components have no gurus.
There are two components with {\em delegation cycles}, namely $\{(10,11), (11, 10)\}$ and 
$\{(5,6), (6, 7), (7,5)\}$. The final component also contains no guru as agent $12$ is a sink node and thus abstains.

\subsection{The Liquid Democracy Game}\label{sec:delegation-game}
The game theoretic model we study is a generalization of the model of Escoffier et al.~\cite{EGP19}.
A pure strategy $s_i$ for agent $i$ corresponds to the selection of at most one outgoing arc.
Thus we can view $s_i$ as an $n$-dimensional vector $\x_i$. Specifically, 
$\x_i$ is a single-entry vector with entry $x_{ij}=1$ if $i$ delegates to agent $j$ and $x_{ij}=0$ otherwise.
Note that if $x_{ii}=1$ then $i$ votes herself (``delegates herself'') and that $\x_i={\bf 0}$ if agent $i$ abstains.

It immediately follows that there is a unique delegation graph $G_{\x}$ associated with
a pure strategy profile $\x=(\x_1, \x_2 \cdots, \x_n)$.
To complete the description of the game, we must define the payoffs corresponding to
each pure strategy profile $\x$. To do this, let agent $i$ have a utility $u_{ij}\in [0,1]$ if she 
has agent $j$ as her guru. Because of the costs of voting in terms of time commitment, knowledge acquisition, etc.,
it may be that $u_{ii}< u_{ij}$ for some $j\neq i$ trusted by $i$.\footnote{Indeed, if this is not the case then
liquid democracy has no relevance.}
We denote the utility of agent $i$ in the delegation graph $G_{\x}$ by $u_i(\x)=u_{i,g(i)}$. 
If agent~$i$ has no guru then it receives zero utility.\footnote{We remark that our results hold even when agents who abstain or have no guru obtain positive utility.}
It follows that only agents that lie in a component of $G_{\x}$ containing a guru can obtain positive utility.

For example, in Figure~\ref{fig:delegation-graph} agent $9$ is a guru so receives utility $u_{9,9}$.
Each agent $i\in \{1,2,3,4\}$ has agent $4$ as its guru so receives utility $u_{i,4}$. 
All the remaining agents have no guru and so receive zero utility.   

An agent $i$ is playing a best response at a pure strategy $\x=(\x_1, \x_2, \cdots, \x_n)$ if he cannot increase
his utility by selecting a different or no out-going arc. The strategy profile is a pure Nash equilibrium if every agent is
playing a best response at $\x$. 
Our interest is in comparing the social welfare of equilibria to the optimal welfare in the game.
To do this, let the social welfare of $\x$ be $\SW(\x) = \sum_{i\in V} u_i(\x)$ and let 
$\Opt=\max\limits_{\x} \sum_{i\in V} u_i(\x)$ be the optimal welfare over all strategy profiles. 
The {\em price of stability} is the worst ratio over all instances between the {\em best} welfare of a Nash equilibrium
and the optimal social welfare. %{\bf [Formal definition?...]}

The reader may ask why could an equilibrium have low social welfare. The problem is that delegation is transitive, but {\em trust is not}.
 Agent $i$ may delegate to an agent $j$ where $u_{ij}$ is large but $j$ may then re-delegate to an agent $k$ where $u_{ik}$ is small.
Worse, agent $i$ may receive no utility if the transitive delegation of its vote leads to a delegation cycle or an abstaining voter.
 Unfortunately, not only can pure Nash equilibria have low social welfare in the liquid democracy game they need not even exist!
 \begin{lem}\label{lem:no-PSNE}
There exist liquid democracy games with no pure strategy Nash equilibrium. 
\end{lem}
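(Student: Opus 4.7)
The plan is to construct a small three-agent instance and show that every pure strategy profile admits a profitable unilateral deviation. Take $V = \{1,2,3\}$ and set $u_{ii} = \tfrac13$ for every $i$, together with $u_{12} = u_{23} = u_{31} = 1$ and all remaining utilities equal to $0$. Each agent is thus willing to vote herself for a modest payoff, but strictly prefers her cyclic successor (indices taken mod $3$) as her guru.

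First I would observe that abstention is strictly dominated by self-voting in any candidate equilibrium, since abstaining yields $0$ while voting for oneself guarantees $\tfrac13$. So every agent either votes herself or delegates to one of the other two. Next I would record a one-line local criterion: if agent $i$ votes herself and her cyclic successor $j$ also votes herself, then $i$ strictly improves by redirecting her arc to $j$, thereby picking up utility $u_{ij} = 1$.

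I would then case-split on the number of self-voters in the profile. If all three vote themselves, the local criterion applies to any agent. If exactly two vote themselves, the cyclic order forces one of them to be the preferred successor of the other, so the criterion applies again. If exactly one votes herself, say agent $1$, then agents $2$ and $3$ each choose a delegation target in $\{1,2,3\}$ distinct from themselves; a short enumeration over the four resulting patterns shows that in every case agent $2$'s delegation chain either terminates at the self-loop of $1$ or enters a two-cycle with $3$, yielding $2$ a utility of at most $u_{21}=0 < \tfrac13$, so she profitably deviates to self-voting. Finally, if nobody votes herself, the delegation graph contains no self-loop and hence no guru anywhere, so every agent receives utility $0$ and each can strictly improve by switching to voting herself.

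The main (and only mildly non-routine) obstacle is organising the one-self-voter case, where one must verify that both delegation choices available to each of the two non-self-voters still leave the preceding agent in the cyclic order with utility $0$; the restricted number of targets makes this a brief exhaustive check, and the rest of the argument is an application of the local criterion together with the observation that a guru exists in a component only if that component contains a self-loop.
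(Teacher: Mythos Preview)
Your proof is correct and follows essentially the same approach as the paper: the same cyclic three-agent instance (with $u_{ii}=\tfrac13$ in place of the paper's $\tfrac12$) and the same case split on the number of self-voters (gurus). The paper's write-up is more compressed---it treats the cases $|S|\le 1$ and $|S|\ge 2$ in one line each---whereas you unpack the single-self-voter case by enumeration; but the underlying argument is identical.
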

\begin{proof}
Let there be three voters with $u_1=(\frac{1}{2},1,0)$, $u_2=(0,\frac{1}{2},1)$ and $u_3=(1,0,\frac{1}{2})$. 
(A similar instance was studied in \cite{EGP19}.) Assume $\x$ is a pure Nash equilibrium and let $S$ be the set of gurus 
in $G_{\x}$. There are two cases.
First, if $|S|\leq 1$ then there exists an agent $i$ with zero utility. This agent can deviate and vote herself to 
obtain utility of $\frac{1}{2}>0$, a contradiction. 
Second, if $|S|\geq 2$ then one of the gurus can delegate its vote to another guru to obtain a utility of $1> \frac{1}{2}$, a contradiction.
Therefore, no pure Nash equilibrium exists. \qed 
\end{proof}

\subsection{Mixed Strategy Equilibria}\label{sec:MSNE}

Lemma~\ref{lem:no-PSNE} tells us that to obtain performance guarantees we must look beyond pure strategies.
Of course, as liquid democracy is a finite game, a mixed strategy Nash equilibrium always exists.
It follows that we can always find a mixed Nash equilibrium in the liquid democracy game and compare
its welfare to the optimal welfare.

Here a mixed strategy for agent $i$ is now simply a non-negative vector $\x_i$ where the entries 
satisfy $\sum_{j=1}^n x_{ij}\le 1$. Note that agent $i$ then abstains with probability $1-\sum_{j=1}^n x_{ij}$.
However, because abstaining is a weakly dominated strategy, we may assume no voter abstains in either the
optimal solution or the best Nash equilibria. Hence, subsequently, each $\x_i$ will be unit-sum vector.

What is the utility $u_i(\x)$ of an agent for a mixed strategy profile $\x=(\x_1, \x_2, \dots, \x_n)$?
It is simply the expected utility  given the probability distribution over the delegation graphs generated by $\x$.
%Namely
%$$u_i(\x)= \sum_{G} \mathbb{P}_\x(G)\cdot u_i(G)$$
%Here $\mathbb{P}_\x(G)=\prod_{(i,j)\in G} x_{ij}$ because $\x_i$ is unit-vector and so every vertex has exactly one
%outgoing arc in each delegation graph generated. Further $u_i(G)=u_{i, g^G(i)}$ where $g^G(i)$ is the guru of $i$ in the
%graph $G$.
There are several equivalent ways to define this expected utility. For the purpose of analysis, 
the most useful formulation is in terms of directed paths in the delegation graphs.
Denote by $\mathcal{P}(i,j)$ be the set of all directed paths from $i$ to $j$ in a complete directed graph 
on $V$. Let $\mathbb{P}_{\x}( g(i)=j)$ be the probability that $j$ is the guru of $i$ given the mixed strategy profile $\x$. Then
\begin{align*}
	u_i(\x) &= \sum_{j\in V} \mathbb{P}_{\x}( g(i)=j  ) \cdot u_{ij} \\ 
	 &= \sum_{j\in V} \mathbb{P}_{\x}( \exists \text{ path from } i \ \text{to} \ j  )\cdot \mathbb{P}_{\x}(\exists \text{ arc } (j,j) ) \cdot u_{ij}\\
	&=\sum_{j\in V} \left( \sum_{P \in \mathcal{P}(i,j)}  \prod\limits_{a\in P}  x_a \right)  \cdot x_{jj} \cdot u_{ij}
\end{align*}
To understand this recall that $\x_i$ is a unit-sum vector. It follows that each delegation graph generated by
the mixed strategy $\x$ has uniform out-degree equal to one.
This is because we generate a delegation graph from $\x$ by
selecting exactly one arc emanating from $i$ according to the probability distribution $\x_i$.
Consequently, $j$ can be the guru of $i$ if and only if the delegation graph contains a self-loop $(j,j)$ {\em and} contains 
a unique path $P$ from $i$ to $j$. The second equality above then holds.
Further, the choice of outgoing arc is independent at each vertex $i$. This implies the third equality.

Regrettably, however, no welfare guarantee is obtainable even with mixed strategy Nash equilibria.

 \begin{lem}\label{lem:bad-MSNE}
The price of stability in liquid democracy games is zero. 
\end{lem}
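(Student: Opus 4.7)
My plan is to exhibit a family of small instances whose best Nash equilibrium has welfare vanishing relative to $\Opt$. The construction is the no-PSNE instance from Lemma~\ref{lem:no-PSNE} with the ``self-value'' shrunk to a parameter $\epsilon \in (0,1)$: three voters in a directed cycle with $u_{i,i+1 \bmod 3} = 1$, $u_{ii} = \epsilon$, and $u_{ij} = 0$ otherwise. The pure profile in which agent $1$ delegates to agent $2$ and agents $2,3$ vote themselves has welfare $1 + 2\epsilon$, and a short case check over all $3^3$ pure profiles shows no configuration does better, so $\Opt = 1 + 2\epsilon$.

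Next I would characterize the MSNE. For each agent $i$, delegating to any target other than $i{+}1$ yields utility $0$ and is weakly dominated by voting self (which guarantees $\epsilon$), so I parameterize strategies by $p_i \in [0,1]$, the probability of voting self (with $1-p_i$ on delegating to $i{+}1$). Applying the path formula of Section~\ref{sec:MSNE}, the only chains contributing positive utility to agent $i$ are the self-loop at $i$ (value $\epsilon$) and the single arc to $i{+}1$ followed by a self-loop at $i{+}1$ (value $1$); all longer chains land at an agent $j$ with $u_{ij}=0$. Thus
\aln{ u_i(\x) \;=\; p_i \cdot \epsilon + (1-p_i)\, p_{i+1}. }
This is linear in $p_i$ with slope $\epsilon - p_{i+1}$, so agent $i$'s best response is $p_i = 1$ if $p_{i+1} < \epsilon$, $p_i = 0$ if $p_{i+1} > \epsilon$, and arbitrary if $p_{i+1} = \epsilon$.

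The crux is to show every MSNE satisfies $p_i = \epsilon$ for all $i$. Assume for contradiction some $p_i \in \{0,1\}$; by cyclic symmetry, WLOG $p_1 = 1$. Agent 1's best response requires $p_2 \leq \epsilon$. Since $p_1 = 1 > \epsilon$, agent $3$'s best response forces $p_3 = 0$; but then, since $p_3 = 0 < \epsilon$, agent $2$'s best response forces $p_2 = 1$, contradicting $p_2 \leq \epsilon < 1$. An analogous argument rules out $p_1 = 0$. Hence every $p_i \in (0,1)$, each agent is indifferent, and the best-response conditions force $p_1 = p_2 = p_3 = \epsilon$. Plugging in yields $u_i = \epsilon^2 + (1-\epsilon)\epsilon = \epsilon$, so the social welfare of the unique MSNE is $3\epsilon$, and
\aln{ \frac{\SW(\text{MSNE})}{\Opt} \;=\; \frac{3\epsilon}{1+2\epsilon} \;\longrightarrow\; 0 \quad \text{as } \epsilon \to 0^+, }
so the infimum over this family of instances is $0$. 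The main obstacle is the case analysis ruling out mixed equilibria with a pure coordinate; the cyclic structure causes a pure choice to cascade around the ring and yield a contradiction within at most three steps, but one must take care at the boundary $p_{i+1} = \epsilon$, which would otherwise permit any $p_i$.
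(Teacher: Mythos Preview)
Your construction and overall approach are exactly the paper's: the same three-agent cyclic instance (with your $\epsilon$ playing the role of the paper's $\delta$), the same unique mixed equilibrium $p_i=\epsilon$ for all $i$, and the same ratio $3\epsilon/(1+2\epsilon)\to 0$. The paper merely asserts that the MSNE is unique and ``straightforward to verify''; you actually supply the best-response case analysis, which is an improvement in rigor.

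There is, however, one incorrect step in your reduction to the two-action game. You claim that delegating to $i{+}2$ ``yields utility $0$'' and is ``weakly dominated by voting self.'' Neither is true in general: if agent $i$ delegates to $i{+}2$, there is a two-step path $i\to i{+}2\to i{+}1$ to the valued guru $i{+}1$, giving utility $x_{i+2,\,i+1}\cdot x_{i+1,\,i+1}$, which can exceed $\epsilon$. The correct domination is that delegating to $i{+}2$ is weakly dominated by delegating to $i{+}1$ (since $x_{i+2,\,i+1}\cdot x_{i+1,\,i+1}\le x_{i+1,\,i+1}$). To exclude $i{+}2$ from the support of any MSNE you then need a short extra argument: equality forces either $x_{i+1,\,i+1}=0$ (whence voting self strictly beats both) or $x_{i+2,\,i+1}=1$, and the latter cascades around the cycle to a contradiction just as in your pure-coordinate elimination. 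With that patch, your parameterization by $p_i$ is justified and the rest of your argument goes through unchanged.
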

\begin{proof}
	
	Consider an instance with three agents whose utility is given as $u_1=(\delta,1,0),u_2=(0,\delta,1) $ and $u_3=(1,0,\delta)$. The optimal welfare is $\Opt=1+2\delta$ obtained by the first and second agent voting and the third agent delegates to the first agent.  
	
	By a similar argument to Lemma~\ref{lem:no-PSNE}, we know that no pure strategy Nash equilibrium exists. It is straightforward to verify that this game has a unique mixed strategy Nash equilibria $\x$, where $x_1=(\delta,1-\delta,0)$, $x_2=(0,\delta,1-\delta)$ and $x_3 =(1-\delta, 0, \delta) $. 
	
 The expected utility of agent~$1$ is then
\begin{eqnarray*}
u_1(\x) &=& \sum_{j\in V} \left( \sum_{P \in \mathcal{P}(i,j)}  \prod\limits_{a\in P}  x_a \right)  \cdot x_{jj} \cdot u_{ij} \\
&=& x_{11} \cdot u_{11} + x_{12}\cdot x_{22} \cdot u_{12}  \\
&=& \delta^2 + (1-\delta)\cdot \delta \\
&=& \delta
 \end{eqnarray*}
The case of agents $2$ and $3$ are symmetric, hence the social welfare of this equilibria is $3\delta$. 
Thus, the price of stability is $\frac{3\delta}{1+2\delta}$ which tends to zero
as $\delta\rightarrow 0$. \qed 
\end{proof}

Lemma~\ref{lem:bad-MSNE} appears to imply that no reasonable social welfare guarantees
can be obtained for liquid democracy. This is not the case. Strong performance guarantees
can be achieved, provided we relax the incentive constraints.
Specifically, we switch our attention to approximate Nash equilibria.
A strategy profile $\x$ is an $\epsilon$-{\em Nash equilibrium} if, for each agent $i$, 
$$u_{i}(\x)\geq \ (1-\epsilon)\cdot u_{i}(\hat{\x}_i,\x_{-i}) \qquad \forall \hat{\x}_i $$

Above, we use the notation $\x_{-i}=\{\x_i\}_{j\neq i}$.
Can we obtain good welfare guarantees for approximate Nash equilibria? We will prove the answer is {\sc yes}
in the remainder of the paper. In particular, we present tight bounds on the price of stability for $\epsilon$-Nash equilibria.
%\textbf{Furthermore we show the bounds obtained therein are tight."Should we include that our result is tight here"?}

\section{The Price of Stability of Approximate Nash Equilibria}\label{sec:bicriteria}

So our task is to compare the social welfare of the best $\epsilon$-Nash equilibrium with
the social welfare of the optimal solution. Let's begin by investigating the optimal solution.

\subsection{An Optimal Delegation Graph} 
By the linearity of expectation, there is an optimal solution in which the agents use only pure strategies. 
In particular, there exists an {\em optimal} delegation graph maximizing social welfare. Moreover, this optimal graph 
has interesting structural properties. 
To explain this, we say that agent $i$ is {\em happy} if she has strictly positive utility in a delegation graph $G$, that is 
$u_{i,g(i)}>0$.  A component $Q$ in $G$ is {\em jolly} if every vertex in $Q$ (except, possibly, the guru) is happy.

The key observation then is that there is an optimal delegation graph in which every component is a jolly star.

\begin{lem}\label{lem: OptLD}
There is an optimal delegation graph that is the disjoint union of jolly stars.
\end{lem}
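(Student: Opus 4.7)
The plan is to take any optimal delegation graph $G^*$ and apply a short sequence of local, welfare non-decreasing modifications until every component is a jolly star. For each modification I will need to check that the result is still a valid delegation graph (every vertex has out-degree at most one) and that the total welfare does not decrease.

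First, I would eliminate every component $C$ that has no guru. As noted in Section~\ref{sec:delegation-graph}, every vertex in such a $C$ contributes zero utility to $G^*$. Replacing the component by $|C|$ singletons, each equipped with a self-loop, changes the welfare contribution from $0$ to $\sum_{i\in C} u_{ii}\geq 0$, so total welfare weakly increases, and every singleton is vacuously jolly (there is no non-guru vertex to check).

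Second, for each component that does have a guru $g$, I would flatten it into a star by replacing the unique outgoing arc of every non-guru vertex $i$ with the direct arc $(i,g)$. Such a component is an arborescence rooted at $g$ together with the self-loop at $g$, so by transitivity $g$ was already the guru of every vertex in the component; the flattening therefore leaves each $u_{i,g(i)}=u_{ig}$ unchanged, preserving welfare.

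Third, I would prune unhappy leaves: if a leaf $i$ of a star centered at $g$ satisfies $u_{ig}=0$, detach $i$ and give it a self-loop, so $i$ becomes its own vacuously jolly singleton. The utility at $i$ changes from $0$ to $u_{ii}\geq 0$, the remaining component stays a star with guru $g$, and welfare is non-decreasing. After these three passes every component is a jolly star with welfare at least $\SW(G^*)=\Opt$, so the resulting graph is optimal. There is no deep obstacle here; the only thing worth being careful about is the bookkeeping in step~2 when the arborescence has depth greater than one, to make sure redirecting arcs does not inadvertently leave a vertex dangling or create a spurious cycle of length $\geq 2$.
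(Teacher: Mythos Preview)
Your argument is correct and mirrors the paper's own proof: both proceed by a sequence of local, welfare non-decreasing modifications (ensure a guru in each component, flatten to a star, then detach unhappy leaves into singletons). The only cosmetic difference is that you handle guru-less components by shattering them entirely into self-loop singletons, whereas the paper instead converts one vertex of the component into a guru and then flattens; both variants are equally valid.
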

\begin{proof}
Let $Q$ be a component in an optimal delegation graph $G$. We may assume $Q$ contains a guru.
To see this, suppose $Q$ contains an abstaining sink node $j$. Then the graph $\hat{G}=G\cup (j,j)$ where $j$ votes 
herself is also optimal. On the other hand, suppose $Q$ contains a cycle $C$ of length at least $2$. Take an agent $j\in C$. 
Then the graph $\hat{G}$ obtained by $j$ voting herself instead of delegating her vote is also optimal.

So we may assume each component contains a guru. Further, we may assume each component is a star.
Suppose not, take a component $Q$ with guru $j$ containing an agent $i$ that does not delegate to $j$.
But then the graph $\hat{G}$ obtained by $i$ delegating her vote directly to $j$ is optimal.

Finally, we may assume each star is {\em jolly}. Suppose $i$ is not a happy agent in a star $Q$ with guru $j$.
Thus $u_{ij}=0$. But then if $i$ changes her delegation and votes herself we again obtain an optimal solution.
In this case $i$ will form a new singleton component which is trivially a jolly star. \qed
\end{proof}

Lemma~\ref{lem: OptLD} states that the optimal solution can be obtained by a pure strategy~$\x^{*}$ whose 
delegation graph is union of jolly stars. The centre of each star is a guru in the optimal solution and the leaves are the 
corresponding happy agents who delegated to the guru.  Denote the set of gurus in the optimal solution by $D^{*}=\{i\in V \ : \ x^*_{ii}=1\}$,  
and let $L_j = \{i\in V\setminus D^{*} \ : \ x^*_{ij}=1\}$ be the agents who delegate to the guru $j$ as illustrated in Figure~\ref{Fig: Star}. It follows that the optimal solution has welfare
$$
\Opt = \sum_{j\in D^{*} } \left( u_{jj} + \sum_{i\in L_j}u_{ij}\right)  
$$ 

\begin{figure}[bt!]
	\centering
	\includegraphics[scale=0.14]{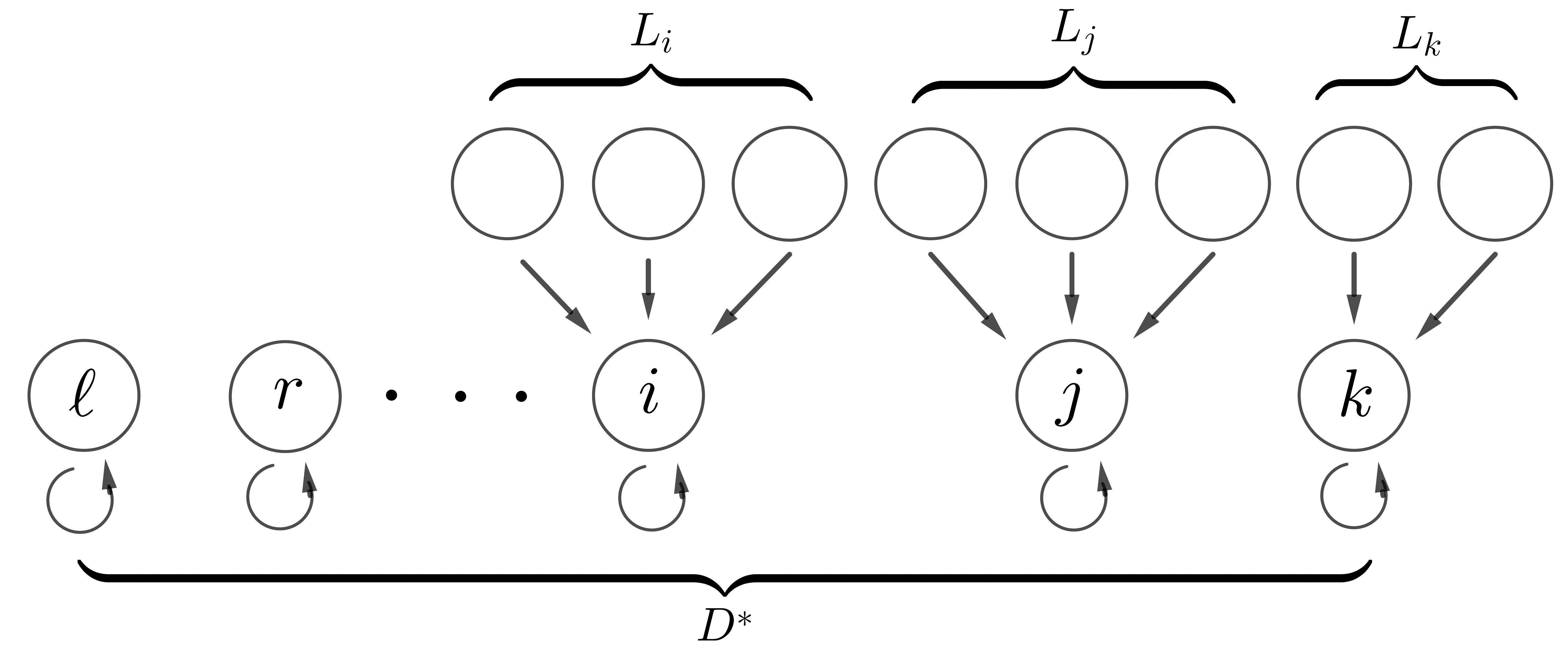}
	\caption{Welfare optimal delegation graph that is disjoint union of jolly stars. }
	\label{Fig: Star}
\end{figure}
\subsection{A Stable Solution}

In order to study the best $\epsilon$-Nash equilibrium we now show the existence of
a ``potentially'' stable solution $\x$ whose definition is inspired by the set $D^*$ of gurus in the optimal solution. 
In Section~\ref{sec:bounds} we will prove that $\x$ is indeed an $\epsilon$-Nash equilibrium and also has high
social welfare.

To obtain $\x$ we require the following definitions.
Denote the standard set of feasible mixed strategies for agent $i$ as 
$$\mathbb{S}_i=\{ \x_{i}\in \mathbb{R}^n_+ \ : \ \sum\limits_{j\in V}x_{ij}=1\  \}$$
Given a fixed strategy profile $\x_{-i}=\{\x_j\}_{j\neq i}$ for the other agents, let
the corresponding best response for agent $i$ be 
$$B_i(\x_{-i})=\argmax\limits_{\hat{\x}\in \mathbb{S}_i} u_{i}(\hat{\x},\x_{-i})$$
For each $i\in D^*$, we denote a restricted set of mixed strategies 
$$\mathbb{S}^R_i = \{ \x_{i}\in \mathbb{R}^n_+ \ : \ \sum\limits_{j\in V}x_{ij}=1\ , x_{ii}\geq \epsilon \}$$
Then for a fixed strategy profile $\x_{-i}$, let 
$$B^{R}_i(\x_{-i})=\argmax\limits_{\hat{\x}\in \mathbb{S}^R_i } u_{i}(\hat{\x},\x_{-i})$$ 
be the best response for the agent $i$ from amongst the restricted set of feasible strategies. 
Next recall Kakutani's Fixed Point Theorem.
\begin{thm}[Kakutani's Fixed Point Theorem] 
Let $K$ be a non-empty, compact and convex subset of $\mathbb{R}^m$, and let $\Phi:K \rightarrow 2^K$ be a set-valued 
function on~$K$ such that:\\
\indent (i) $\Phi(x)$ is non-empty and convex for all $x\in K$, and\\
\indent (ii) $\Phi$ has a closed graph.\\
Then $\Phi$ has a fixed point, that is, there exists an $x^*\in K$ with $x^*\in \Phi(x^*)$.
\end{thm}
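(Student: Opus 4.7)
The plan is to derive Kakutani's Fixed Point Theorem from Brouwer's Fixed Point Theorem by approximating the set-valued map $\Phi$ with a sequence of continuous single-valued self-maps of $K$, applying Brouwer to each, and then passing to the limit. The crucial properties to exploit are (a) non-emptiness, so that we can make selections at chosen points, and (b) convexity of values, so that convex combinations of selected points stay inside the images of $\Phi$ at the limit.

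First, for each integer $n\ge 1$ I would fix a simplicial triangulation $T_n$ of $K$ of mesh at most $1/n$ (treating $K$ as a compact convex subset of $\mathbb{R}^m$, which admits such triangulations; if necessary, first embed $K$ in a simplex of $\mathbb{R}^m$ and compose with the nearest-point retraction back to $K$). Using condition (i), at each vertex $v$ of $T_n$ pick an arbitrary element $y_v \in \Phi(v)$, and define $\phi_n\colon K\to K$ by affine interpolation on each simplex: on a simplex with vertices $v_0,\dots,v_d$ and a point $x=\sum_i \lambda_i v_i$, set $\phi_n(x)=\sum_i \lambda_i\, y_{v_i}$. Since $K$ is convex and each $y_{v_i}\in K$, the image lies in $K$; since the interpolations agree on shared faces, $\phi_n$ is a continuous self-map of $K$. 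Brouwer's Fixed Point Theorem then yields $x_n\in K$ with $\phi_n(x_n)=x_n$.

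By compactness of $K$, pass to a subsequence $x_{n_k}\to x^*$. The main obstacle, and the only place where both hypotheses (i) and (ii) truly work together, is to show $x^*\in \Phi(x^*)$. For each $k$, write $x_{n_k}=\sum_{i} \lambda_i^{(k)} v_i^{(k)}$ as a convex combination over the vertices of the simplex of $T_{n_k}$ containing $x_{n_k}$ (at most $m+1$ terms), so that
\[
x_{n_k}=\phi_{n_k}(x_{n_k})=\sum_i \lambda_i^{(k)}\, y_{v_i^{(k)}}, \qquad y_{v_i^{(k)}}\in \Phi\bigl(v_i^{(k)}\bigr),\qquad \bigl\|v_i^{(k)}-x^*\bigr\|\to 0.
\]
Passing to further subsequences (using compactness of $[0,1]$ and of $K$), assume $\lambda_i^{(k)}\to \lambda_i^*$ and $y_{v_i^{(k)}}\to y_i^*$ for each $i$. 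The closed graph property (ii), applied to each sequence $\bigl(v_i^{(k)},\,y_{v_i^{(k)}}\bigr)\to (x^*,y_i^*)$, gives $y_i^*\in \Phi(x^*)$. The convexity of $\Phi(x^*)$ from condition (i) then yields
\[
x^*=\lim_k \sum_i \lambda_i^{(k)}\, y_{v_i^{(k)}} = \sum_i \lambda_i^*\, y_i^* \in \Phi(x^*),
\]
which completes the argument. The technical subtlety is purely triangulation bookkeeping — ensuring that the selected vertex values and barycentric coefficients can all be made to converge simultaneously — and this is handled by iterated extraction of subsequences, since only finitely many indices $i\le m+1$ arise at each stage.
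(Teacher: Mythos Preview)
The paper does not actually prove Kakutani's Fixed Point Theorem; it is stated there as a classical tool (without proof) and then applied to establish Theorem~\ref{thm: BR}. So there is no ``paper's own proof'' to compare against.

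That said, your argument is essentially the standard derivation of Kakutani from Brouwer and is correct in outline. One point worth tightening: an arbitrary compact convex $K\subseteq\mathbb{R}^m$ need not admit a finite simplicial triangulation (it need not be a polytope). Your parenthetical fix is the right one, but it should be made explicit: enclose $K$ in a simplex $S$, triangulate $S$ with mesh $\le 1/n$, at each vertex $v$ of this triangulation pick $y_v\in\Phi(r(v))$ where $r\colon S\to K$ is the (continuous) nearest-point retraction, and interpolate affinely to get $\phi_n\colon S\to K\subseteq S$. Brouwer then gives $x_n=\phi_n(x_n)\in K$, and in the limit step the vertices $v_i^{(k)}$ lie in $S$ but $r(v_i^{(k)})\to r(x^*)=x^*$, so the closed-graph hypothesis still applies to the pairs $\bigl(r(v_i^{(k)}),\,y_{v_i^{(k)}}\bigr)$. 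With that adjustment the proof goes through exactly as you wrote.
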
	
Here a set-valued function $\Phi$ has a {\em closed graph} if $(x^k,y^k)\rightarrow (x,y) $ and $y^k \in \Phi (x^k) $ implies that $y\in \Phi(x)$.

\begin{thm}\label{thm: BR}
There exists a strategy profile $\x$ such that:\\
\indent (a) For all $i\in D^*$, we have $\x_i \in B^R_i(\x_{-i})$, and\\
\indent (b) For all $j\notin D^*$, we have $\x_j \in B_j(\x_{-j})$.
\end{thm}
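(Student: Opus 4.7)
The plan is to apply Kakutani's Fixed Point Theorem to a best-response correspondence on a hybrid strategy space where the gurus $i \in D^*$ are constrained to $\mathbb{S}^R_i$ and all other agents range freely over $\mathbb{S}_i$. Specifically, let $K = \prod_{i \in V} K_i$ where $K_i = \mathbb{S}^R_i$ if $i \in D^*$ and $K_i = \mathbb{S}_i$ otherwise. Each $K_i$ is a polytope, which is non-empty because $\epsilon \le 1$, and is compact and convex as the intersection of the simplex with a halfspace; thus $K$ is non-empty, compact, and convex. Define $\Phi : K \to 2^K$ coordinatewise by $\Phi_i(\x) = B^R_i(\x_{-i})$ for $i \in D^*$ and $\Phi_i(\x) = B_i(\x_{-i})$ otherwise, and set $\Phi(\x) = \prod_i \Phi_i(\x)$. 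A fixed point $\x \in \Phi(\x)$ is precisely what conditions (a) and (b) demand, so it suffices to verify Kakutani's hypotheses.

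To show non-emptiness, I would use that the path-sum formula
\[ u_i(\x) = \sum_{j \in V} \left( \sum_{P \in \mathcal{P}(i,j)} \prod_{a \in P} x_a \right) x_{jj} u_{ij} \]
is a polynomial in the entries of $\x$, so $u_i$ is continuous on $K$; hence the argmax of $u_i(\cdot,\x_{-i})$ over the non-empty compact set $K_i$ is non-empty. For convexity of $\Phi_i(\x)$, the key structural remark is that every path $P \in \mathcal{P}(i,j)$ starts at $i$, so $P$ contains exactly one arc emanating from $i$ (its first arc) and therefore exactly one factor drawn from $\x_i$. Consequently, with $\x_{-i}$ held fixed, $u_i(\x_i, \x_{-i})$ is a linear function of $\x_i$, and its argmax over the polytope $K_i$ is a face of $K_i$, which is convex.

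For the closed-graph condition, I would argue directly: suppose $\x^k \to \x$ and $\y^k \to \y$ in $K$ with $\y^k \in \Phi(\x^k)$. Then for every $i$ and every $\hat{\x}_i \in K_i$, the inequality $u_i(\y^k_i, \x^k_{-i}) \geq u_i(\hat{\x}_i, \x^k_{-i})$ passes to the limit by continuity of $u_i$, yielding $u_i(\y_i, \x_{-i}) \geq u_i(\hat{\x}_i, \x_{-i})$. Hence $\y_i \in \Phi_i(\x)$ for each $i$, so $\y \in \Phi(\x)$. Kakutani's Fixed Point Theorem then produces the desired profile.

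The main subtlety is verifying the convexity condition: this requires the observation that the utility $u_i$ is linear, not merely multilinear, in $\x_i$, which in turn rests on the fact that the first arc of each path in $\mathcal{P}(i,j)$ is the unique location where any coordinate of $\x_i$ appears in the path product. Beyond that, the proof is a routine adaptation of the standard Nash-existence argument to the constrained strategy space, the only non-standard feature being the guru lower bound $x_{ii} \geq \epsilon$, which poses no difficulty because $\mathbb{S}^R_i$ is itself a non-empty compact convex polytope.
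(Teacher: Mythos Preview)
Your proposal is correct and follows essentially the same route as the paper: both apply Kakutani's Fixed Point Theorem to the coordinatewise best-response correspondence on the hybrid space $\prod_{i\in D^*}\mathbb{S}^R_i \times \prod_{j\notin D^*}\mathbb{S}_j$, verifying non-emptiness, convexity, and the closed-graph property in the standard way. Your argument is in fact slightly sharper on the convexity step---you observe that $u_i(\cdot,\x_{-i})$ is genuinely \emph{linear} in $\x_i$ (since exactly one coordinate of $\x_i$ appears in each path product), whereas the paper invokes multilinearity and checks the convex-combination inequality by hand---but this is a cosmetic difference within the same overall argument.
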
	
\begin{proof}
	%Both sets $\mathbb{S}_i$ and $\mathbb{S}^R_i$ are convex and compact. 
	
Let the feasible set of strategy profiles be $\Xi= \prod\limits_{i\in D^*} \mathbb{S}^R_i \times \prod\limits_{j\notin D^*} \mathbb{S}_j$, 
a subset of Euclidean space. 
Without loss of generality, let $D^*=\{1,2,\dots, k\}$.
Now define a set valued function $\Phi \colon \Xi \longrightarrow 2^\Xi$ by
$$\x \mapsto ( \underbrace{B_1^R(\x_{-1}),\cdots,B_k^R(\x_{-k}),}_{D^*} \underbrace{B_{k+1}(\x_{-(k+1)}),\cdots,B_n(\x_{-n})}_{V\setminus D^*}    )$$	
That is, for each $\x\in \Xi$ we have $\Phi(\x)\subseteq \Xi$. Note the statement of the theorem is equivalent to showing that $\Phi$ has a fixed point. 
	
Observe that $\Phi$ satisfies the conditions of Kakutani's Fixed Point Theorem. Indeed $\Xi$ is nonempty, 
compact and convex, since it is a product of non-empty, compact and convex sets $\mathbb{S}^R_i$ and $\mathbb{S}_j$. 
	
Next let's verify that $\Phi(\x)\neq \emptyset$. This holds since, for each agent $i$,  we have $B^R_i(\x_{-i})\neq \emptyset$ 
or $B_i(\x_{-i})\neq \emptyset$ by the continuity of $u_i(\ \cdot \ ,\x_{-i})$ and the Weierstrass Extreme Value Theorem. 
	
Furthermore, for all $\x\in \Xi$ the set $\Phi(\x)\subseteq \Xi$ is convex. This is because, for each $i\in D^*$ 
and $j\in V\setminus D^*$, the sets $B^R_i(\x_{-i})$ and $B_j(\x_{-j})$ are convex, and thus $\Phi(\x)$ is Cartesian 
product of convex sets. We must now show that both $B_j(\x_{-j})$ and $B^R_i(\x_{-i})$ are convex.
The convexity of $B_j(\x_{-j})$ follows immediately by the multilinearity of $u_i$. 
Next take an agent $i\in D^* $. If $\y_i,\z_i \in B^R_i$ then, 
for all $\lambda\in [0,1]$ and any $\hat{\x_i}\in \mathbb{S}^R_i$, we have
	\begin{align*}
	u_i(\lambda\y_i+(1-\lambda)\z_i,\x_{-i}) &\ =\ \lambda u_i(\y_i,\x_{-i}) + (1-\lambda)u_i(\z_i,\x_{-i}) \\ 
	&\ \geq \ u_i (\hat{\x}_i,\x_{-i})
	\end{align*}
Observe $\lambda\y_i+(1-\lambda)\z_i\in \mathbb{S}^R_i$ since $\lambda y_{ii}+(1-\lambda)z_{ii}\geq \lambda \epsilon +(1-\lambda)\epsilon = \epsilon$ 
and $\lambda\sum_{j\in V} y_{ij}+(1-\lambda)\sum_{j\in V} z_{ij} = 1  $. Thus $\lambda \y_i +(1-\lambda)\z_i\in B^R_i(\x_{-i})$ for 
any $\lambda\in [0,1]$, which implies $B^R_i(\x_{-i})$ is convex.
	
Finally, $\Phi$ has a closed graph because each $u_i(\x_i,\x_{-i})$ is a continuous function of $\x_i$ for any fixed $\x_{-i}$, and 
both sets $\mathbb{S}^R_i$ and $\mathbb{S}_i$ are compact. Thus, by Kakutani's Fixed Point Theorem, $\Phi$ 
has a fixed point $\x$. Hence (a) and (b) hold. \qed	
\end{proof}

\subsection{Bicriteria Guarantees for Liquid Democracy}\label{sec:bounds}

We will now prove that the fixed point $\x$ from Theorem~\ref{thm: BR} gives our main result: there is an $\epsilon$-Nash equilibrium
with welfare ratio at least $\epsilon$. First let's show the incentive guarantees hold.
\begin{lem}\label{lem:approx-NE}
The fixed point $\x$ is an $\epsilon$-Nash equilibrium.
\end{lem}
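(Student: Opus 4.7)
The plan is to verify the $\epsilon$-Nash condition $u_i(\x) \geq (1-\epsilon)\cdot u_i(\hat{\x}_i, \x_{-i})$ for every agent $i$ and every deviation $\hat{\x}_i$, handling the two groups of agents from Theorem~\ref{thm: BR} separately.

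For agents $j \notin D^*$, Theorem~\ref{thm: BR}(b) already guarantees that $\x_j$ is an unconstrained best response to $\x_{-j}$, so $u_j(\x) \geq u_j(\hat{\x}_j, \x_{-j}) \geq (1-\epsilon)\cdot u_j(\hat{\x}_j, \x_{-j})$ holds trivially. The interesting case is $i \in D^*$, where Theorem~\ref{thm: BR}(a) only provides a best response subject to the restriction $x_{ii} \geq \epsilon$. So we need to bound how much utility $i$ could gain by deviating to an arbitrary mixed strategy $\hat{\x}_i \in \mathbb{S}_i$ in which $\hat{x}_{ii}$ may be smaller than $\epsilon$.

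The key idea is the standard convex-combination trick: given any $\hat{\x}_i \in \mathbb{S}_i$, form the perturbed strategy $\tilde{\x}_i = \epsilon \cdot \e_i + (1-\epsilon)\cdot \hat{\x}_i$, where $\e_i$ is the unit vector representing ``vote for self.'' Then $\tilde{x}_{ii} = \epsilon + (1-\epsilon)\hat{x}_{ii} \geq \epsilon$ and the entries still sum to one, so $\tilde{\x}_i \in \mathbb{S}^R_i$. Since $u_i(\cdot, \x_{-i})$ is multilinear in $\x_i$ (from the path-product expression for $u_i$ derived earlier), we have
\begin{align*}
u_i(\tilde{\x}_i, \x_{-i}) &= \epsilon\cdot u_i(\e_i, \x_{-i}) + (1-\epsilon)\cdot u_i(\hat{\x}_i, \x_{-i}) \\
&\geq (1-\epsilon)\cdot u_i(\hat{\x}_i, \x_{-i}),
\end{align*}
using only nonnegativity of utilities. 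Combined with the fact that $\x_i \in B^R_i(\x_{-i})$, so $u_i(\x) \geq u_i(\tilde{\x}_i, \x_{-i})$, this yields the required inequality.

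I do not expect any serious obstacle here; the work was done in building the fixed point. The one thing to be careful about is that the perturbation $\tilde{\x}_i$ really does lie in $\mathbb{S}^R_i$ (which is why the floor $\epsilon$ on $x_{ii}$ was chosen exactly this way in the definition of $\mathbb{S}^R_i$), and that the bound uses only multilinearity of $u_i$ in the $i$-th coordinate, which is immediate from the formula $u_i(\x) = \sum_{j} \big(\sum_{P \in \mathcal{P}(i,j)} \prod_{a \in P} x_a\big) \cdot x_{jj} \cdot u_{ij}$ since every path out of $i$ uses exactly one arc $(i,\cdot)$.
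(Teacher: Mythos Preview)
Your proof is correct and follows the same approach as the paper: trivial for $j\notin D^*$, and for $i\in D^*$ the perturbed strategy $\tilde{\x}_i=\epsilon\,\e_i+(1-\epsilon)\hat{\x}_i$ is exactly the paper's $\y_i$, with the only cosmetic difference that you invoke linearity of $u_i(\cdot,\x_{-i})$ directly while the paper expands the path-product formula to obtain the same inequality.
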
 
\begin{proof}
Take an agent $j\notin D^*$. Then for any $\hat{\x}_j \in \mathbb{S}_i$ we have
$$u_{j}(\x_j,\x_{-j})\ \geq\ u_{j}(\hat{\x}_j,\x_{-j}) \ \geq\ (1-\epsilon)\cdot u_i(\hat{\x}_i,\x_{-i})$$ 
The incentive guarantee for $j$ follows immediately.
%
%For all $i\in D^*$, if $\x_i \in B^R_i(\x_{-i})$ then 
%		$$u_i(\x_i,\x_{-i})\geq (1-\epsilon)\cdot u_i(\hat{\x}_i,\x_{-i}) \qquad \text{for all } \hat{\x}_i \in \mathbb{S}_i$$ 

Next consider an agent $i\in D^*$. Take any $\hat{\x}_i\in \mathbb{S}_i$ and define a new strategy $\y_i$ as follows:
		$$
		y_{ij}=
		\begin{cases}
		\epsilon+ (1-\epsilon) \cdot \hat{x}_{ii} &\text{if} \ \ j=i \\ 
		(1-\epsilon)\cdot  \hat{x}_{ij}  &\text{if} \ \ j\neq i \\ 
		\end{cases}
		$$ 
Observe that $\y_i\in \mathbb{S}^R_i$ because $y_{ii}\geq \epsilon$ and 
$\sum_{j\in N} y_{ij}= \epsilon+(1-\epsilon)\cdot \sum_{j\in V} \hat{x}_{ij} =1$. 
		
Now for any path $P=\{a_1,a_2,\cdots, a_k\}$ from $i$ to $j$ where $a_i$ are the arcs, the probability of obtaining 
this path in the delegation graph generated by the strategy profile $\{\y_i, \x_{-i}\}$ is exactly $y_{a_1}\cdot  \prod\limits_{a\in P\setminus a_1} x_a$. 
Thus
\begin{align*}
		u_i(\y_i,\x_{-i}) 
		&\ =\  \sum_{j\in V} \mathbb{P}_{\y_i,\x_{-i}}( g(i)=j  ) \cdot u_{ij} \\
		&\ =\ u_{ii}y_{ii} + \sum_{j\in V\setminus i} u_{ij} x_{jj}  
			\cdot \left( \sum_{P \in \mathcal{P}(i,j)} y_{a_1}  \prod\limits_{a\in P\setminus a_1} x_a\right) \\  	   		
		&\ \geq\ (1-\epsilon)\cdot u_{ii}\hat{x}_{ii} + (1-\epsilon)\cdot \sum_{j\in V\setminus i} u_{ij} x_{jj} 
		\cdot \left( \sum_{P \in \mathcal{P}(i,j)} \hat{x}_{a_1}  \prod\limits_{a\in P\setminus a_1} x_a\right) \\ 
		&\ =\ (1-\epsilon)\cdot \left(u_{ii}\hat{x}_{ii} + \sum_{j\in V\setminus i} u_{ij} x_{jj}  
		\cdot \left( \sum_{P \in \mathcal{P}(i,j)} \hat{x}_{a_1}  \prod\limits_{a\in P\setminus a_1} x_a\right)\right) \\
		&\ =\ (1-\epsilon) \cdot u_{i}(\hat{\x}_i,\x_{-i})
\end{align*}
But $\x_i\in B^R_i(\x_{-i})$. Hence, $u_i(\x_i,\x_{-i})\geq u_i(\y_i,\x_{-i})$ because $\y_i\in \mathbb{S}^R_i$.
It follows that $u_i(\x_i,\x_{-i})\geq (1-\epsilon)\cdot u_i(\hat{\x}_i,\x_{-i})$
and so the incentive guarantee for $i$.
Thus $\x$ is an $\epsilon$-Nash equilibrium. \qed  	
\end{proof}

Next let's prove the social welfare guarantee holds for $\x$.
\begin{thm}\label{thm:bicriteria}
For all $\epsilon\in [0,1]$, and for any instance of the liquid democracy game, there exists an $\epsilon$-Nash equilibrium with social 
welfare at least $\epsilon\cdot \Opt$.
\end{thm}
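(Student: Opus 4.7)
The plan is to combine the fixed point $\x$ constructed in Theorem~\ref{thm: BR} (already shown to be an $\epsilon$-Nash equilibrium by Lemma~\ref{lem:approx-NE}) with the jolly-star decomposition of $\Opt$ from Lemma~\ref{lem: OptLD}. Writing $\Opt = \sum_{i \in D^*}\bigl(u_{ii} + \sum_{j \in L_i} u_{ji}\bigr)$, it suffices to show termwise that $u_i(\x) \geq \epsilon\, u_{ii}$ for each $i \in D^*$ and $u_j(\x) \geq \epsilon\, u_{ji}$ for each leaf $j \in L_i$. Summing these bounds immediately gives $\SW(\x) \geq \epsilon \cdot \Opt$.

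For a guru $i \in D^*$, the bound is essentially by construction. Since $\x_i \in \mathbb{S}^R_i$, we have $x_{ii} \geq \epsilon$. The only way for $i$ to be her own guru in a delegation graph generated by $\x$ is for the self-loop $(i,i)$ to be selected, which happens with probability exactly $x_{ii}$. Hence $u_i(\x) \geq x_{ii} \cdot u_{ii} \geq \epsilon \cdot u_{ii}$.

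For a leaf $j \in L_i$, I would use $j$'s best-response property against a carefully chosen deviation. Since $j \notin D^*$, Theorem~\ref{thm: BR}(b) says $\x_j \in B_j(\x_{-j})$, so $j$ plays a best response from the full set $\mathbb{S}_j$. Consider the deviation $\hat{\x}_j$ that deterministically delegates to $i$ (i.e., $\hat{x}_{ji}=1$). Under $(\hat{\x}_j, \x_{-j})$, the event $g(j) = i$ occurs precisely when $i$ self-loops, which has probability $x_{ii} \geq \epsilon$ independent of everything else. Thus $u_j(\hat{\x}_j, \x_{-j}) \geq x_{ii} \cdot u_{ji} \geq \epsilon \cdot u_{ji}$, and by the best-response property, $u_j(\x) \geq u_j(\hat{\x}_j, \x_{-j}) \geq \epsilon \cdot u_{ji}$.

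The main subtle point—and the only place where the argument requires any care—is the deviation bound for leaves. One must notice that the restriction $x_{ii} \geq \epsilon$ imposed on gurus is exactly what guarantees a leaf $j$ a probability-$\epsilon$ pathway to her intended guru $i$, and that this is available to $j$ as a legitimate unilateral deviation in $\mathbb{S}_j$ even though $j$'s own equilibrium strategy $\x_j$ is unconstrained. Once these two ingredients are in place, summing yields
\begin{align*}
\SW(\x) \;=\; \sum_{i \in D^*} u_i(\x) + \sum_{i \in D^*}\sum_{j \in L_i} u_j(\x)
\;\geq\; \epsilon \sum_{i \in D^*}\Bigl(u_{ii} + \sum_{j \in L_i} u_{ji}\Bigr)
\;=\; \epsilon \cdot \Opt,
\end{align*}
which, combined with Lemma~\ref{lem:approx-NE}, gives the theorem.
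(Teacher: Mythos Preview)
Your proposal is correct and follows essentially the same approach as the paper: you use the fixed point $\x$ from Theorem~\ref{thm: BR}, invoke Lemma~\ref{lem:approx-NE} for the $\epsilon$-Nash property, and then bound $\SW(\x)$ termwise over the jolly-star decomposition of $\Opt$ using $x_{ii}\ge\epsilon$ for gurus and the best-response property (against the deterministic ``delegate to my optimal guru'' deviation) for leaves. Apart from swapping the roles of the indices $i$ and $j$, this is exactly the paper's argument.
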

\begin{proof}
It suffices to prove that $\x$ has social welfare at least $\epsilon\cdot \Opt$.
We have	   
	\begin{align*}
	\SW(\x) &= \sum_{j\in V} u_{j}(\x)  \\ 
	& = \sum_{j\in D^*} u_j(\x) + \sum_{j\in D^*} \sum_{i\in L_j} u_i(\x) \\ 
	&\geq \sum_{j\in D^*}u_{jj}\cdot  x_{jj} +  \sum_{j\in D^*} \sum_{i\in L_j} u_{ij}\cdot x_{jj} \\ 
	&\geq \sum_{j\in D^*} \left( u_{jj}\cdot \epsilon  +\sum_{i\in L_j} u_{ij}\cdot \epsilon \right)  \\ 
	&= \epsilon\cdot \sum_{j\in D^*} \left( u_{jj} +  \sum_{i\in L_j} u_{ij} \right)  \\ 
	&= \epsilon\cdot \Opt
	\end{align*}	
The first inequality follows since each agent $i\in L_j$ satisfies $u_i(\x_i,\x_{-i})\geq u_i(\hat{\x}_i,\x_{-i})$ for all 
$\hat{\x}_i \in \mathbb{S}_i$. 
In particular, the deviation $\hat{\y}_i$ of delegating to the guru $j$ with probability $1$ 
implies $u_i(\x_i,\x_{-i})\geq u_i(\hat{\y}_i,\x_{-i})\geq  u_{ij}x_{jj}$. Finally, the second inequality holds as
we have $\x_j\in  \mathbb{S}_j^R$, for each $j\in D^*$. Therefore $x_{jj}\geq \epsilon$ and the
welfare guarantee holds.~\qed
\end{proof}
We can deduce from Theorem~\ref{thm:bicriteria} that strong approximation guarantees can {\em simultaneously}
be obtained for both social welfare and rationality. In particular, setting~$\epsilon=\frac12$ gives factor $2$ approximation guarantees for both criteria.

\begin{cor} \label{cor:upper} 
For any instance of the liquid democracy game, there exists a $\frac12$-Nash equilibrium with welfare at least~$\frac12\cdot \Opt$.
\end{cor}

\subsection{A Tight Example}\label{sec:tight}
We now prove upper bounds on the welfare guarantee obtainable by \textit{any} $\epsilon$-Nash equilibrium. 
In particular, we show that the bicriteria guarantee obtained in Theorem~\ref{thm:bicriteria} is tight.  
\begin{thm}\label{thm:PoS-upper}
For all $\epsilon\in[0,1]$, there exist instances such that any $\epsilon$-Nash equilibrium has 
welfare at most $\epsilon\cdot \Opt +\gamma$ for any $\gamma>0$
\end{thm}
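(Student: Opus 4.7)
I propose to exhibit, for each $\epsilon\in[0,1)$ and each $\gamma>0$, a three-agent instance whose every $\epsilon$-Nash equilibrium has welfare at most $\epsilon\cdot\Opt+\gamma$. Fix a small parameter $\eta>0$, set $\alpha:=\sqrt{\eta}$, and define the utility vectors $u_1=(\eta,\alpha,0)$, $u_2=(0,\alpha,0)$, and $u_3=(1,0,\eta)$. By Lemma~\ref{lem: OptLD} the optimum is attained by a disjoint union of jolly stars, and a direct enumeration of such profiles yields $\Opt=1+\eta+\alpha$, realized when agents $1$ and $2$ both vote and agent $3$ delegates to agent $1$.

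For an arbitrary $\epsilon$-Nash equilibrium $\x$, two constraints follow immediately from the $\epsilon$-Nash inequalities. Since $u_{2j}=0$ for $j\neq 2$, we have $u_2(\x)=\alpha\,x_{22}$, and agent $2$'s best deviation (self-voting) has utility $\alpha$; hence $x_{22}\geq 1-\epsilon$. Next, the path-sum formula of Section~\ref{sec:MSNE} gives $u_1(\x)=x_{11}\eta+x_{12}\alpha\,x_{22}$, because the $x_{13}$-mass contributes nothing ($u_{13}=0$). Agent $1$'s best pure deviation is to delegate to agent $2$, yielding $\alpha\,x_{22}\geq\alpha(1-\epsilon)>\eta$ once $\eta$ is sufficiently small, and substituting into $u_1(\x)\geq(1-\epsilon)\alpha\,x_{22}$ rearranges to $x_{11}\leq\epsilon+O(\sqrt{\eta}/(1-\epsilon))$.

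These two forcings bound the social welfare. Every directed path from agent $3$ that traverses agent $2$ either terminates at agent $2$ (contributing $u_{32}=0$) or completes a directed cycle, so $u_3(\x)=x_{33}\eta+x_{31}x_{11}\leq\eta+x_{11}$. Combined with $u_2(\x)\leq\alpha$ and $u_1(\x)\leq\eta+\alpha$, this yields $\SW(\x)\leq\epsilon+O(\sqrt{\eta})$, whereas $\epsilon\cdot\Opt=\epsilon+\epsilon(\eta+\alpha)$. The gap $\SW(\x)-\epsilon\cdot\Opt$ is therefore $O(\sqrt{\eta})$, which is at most $\gamma$ by choosing $\eta$ sufficiently small (depending on $\epsilon$ and $\gamma$).

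The main obstacle is the robustness of the formula for $u_1$: it must hold in every $\epsilon$-Nash equilibrium, independently of how agent $2$ distributes her residual probability mass $1-x_{22}$ across delegations to agents $1$, $3$, or (where permitted) abstention. A short case analysis is needed to verify that every downstream chain from agent $1$ passing through agent $2$ either reaches some agent $k$ with $u_{1k}=0$ or else closes a directed cycle, so that each such branch contributes zero to $u_1$ and the formula is insensitive to agent $2$'s choice. A symmetric check for agent $3$'s residual mass $x_{32}$ shows that paths from agent $3$ through agent $2$ likewise contribute nothing, and this is precisely what rules out any putative high-welfare $\epsilon$-Nash equilibrium in which, say, agent $3$ votes almost surely instead of delegating.
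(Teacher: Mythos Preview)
Your construction and the paper's share the same mechanism: the guru that matters for welfare (your agent~$1$, the paper's agent~$2$) is lured away by a small side payment, forcing her self-voting probability down to roughly~$\epsilon$ in any $\epsilon$-Nash equilibrium, which caps the followers' welfare at~$\epsilon\cdot\Opt$ plus lower-order terms. The paper uses $n+2$ agents with a single small parameter~$\delta$: agents~$1$ and~$2$ get utility~$\delta$ only from guru~$1$, agents~$3,\dots,n+2$ get utility~$1$ only from guru~$2$; then $x_{11}\ge 1-\epsilon$ forces $x_{21}\ge 1-\epsilon$ and hence $x_{22}\le\epsilon$ \emph{exactly}, with no error term. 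Your three-agent instance is more compact but pays for it with the second parameter $\alpha=\sqrt{\eta}$ and an $O(\sqrt{\eta})$ slack in $x_{11}\le\epsilon+O(\sqrt{\eta}/(1-\epsilon))$.

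One point to tighten: your claimed identities $u_1(\x)=x_{11}\eta+x_{12}\alpha\,x_{22}$ and $u_3(\x)=x_{33}\eta+x_{31}x_{11}$ are not exact, and the stated justification (``the $x_{13}$-mass contributes nothing since $u_{13}=0$'', ``paths from~$3$ through~$2$ contribute nothing'') is incorrect. The path $1\to 3\to 2$ reaches guru~$2$ with $u_{12}=\alpha\neq 0$, and the path $3\to 2\to 1$ reaches guru~$1$ with $u_{31}=1\neq 0$. Fortunately this does not damage the argument: the correct expressions are $u_1(\x)=x_{11}\eta+(x_{12}+x_{13}x_{32})x_{22}\alpha\le x_{11}\eta+(1-x_{11})x_{22}\alpha$ and $u_3(\x)=x_{33}\eta+(x_{31}+x_{32}x_{21})x_{11}\le \eta+x_{11}$, and these upper bounds are precisely what you need to derive $x_{11}\le\epsilon+O(\sqrt{\eta})$ and $\SW(\x)\le\epsilon+O(\sqrt{\eta})$. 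So the proof goes through once you replace the two false equalities by these inequalities.
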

\begin{proof}
	Take an instance with $n+2$ agents. Let agents $\{1,2\}$ have identical utility functions with $u_{ij}=\delta $ if $j=1$ and $u_{ij}=0$ otherwise. 
	The remaining agents $i\in\{3,\cdots, n+2\}$ have utilities  	
	$$
	u_{ij} = 
	\begin{cases}
	1 &\text{if } j=2  \\ 
	0 & \text{otherwise}	
	\end{cases} 
	$$ 
	Observe that $\Opt = \delta + n$ which is obtained by agents 1 and 2 voting while the remaining 
	agents delegate to agent $2$. Now let $\x$ be an $\epsilon$-Nash equilibrium. We claim that $x_{22}\leq \epsilon$.
	To see this, note that $x_{11}\geq (1-\epsilon)$.  If $x_{11}< (1-\epsilon)$ then $u_{1}(\x)<(1-\epsilon)\delta $. 
	But this contradicts the fact that $\x$ is an $\epsilon$-Nash equilibrium, as agent $1$ can deviate and vote herself to obtain a utility of $\delta$.
	Furthermore,  $x_{11}\geq (1-\epsilon)$ implies  $x_{21}\geq (1-\epsilon)$ by a similar argument. 
	Since $\sum_{j\in N} x_{2j}=1$, we do have $x_{22}\leq \epsilon$. The social welfare of $\x$ is then
	$$
	\SW(\x) \ \leq\ \delta + (1-\epsilon)\delta + \epsilon n  
	\ =\ 2(1-\epsilon)\delta + \epsilon \Opt
	$$
	Letting $\gamma= 2(1-\epsilon)\delta$ gives the desired bound. Since $\delta$ can be made arbitrarily small, $\gamma$ 
	can also be made arbitrarily small.  \qed 
\end{proof}
Together, Theorems~\ref{thm:bicriteria} and~\ref{thm:PoS-upper} imply that the price of stability of $\epsilon$-Nash equilibrium is exactly $\epsilon$ in the liquid democracy game.

\section{ Extensions and Computational Complexity}

\subsection{Model Extensions}

Our bicriteria results extend to the settings of repeated games, weighted voters, and multiple delegates.

\noindent $\bullet$ {\em Repeated Games}.
Recollect that an underlying motivation for liquid democracy is that it can be applied repeatedly over time with agents
having the option of delegating to different agents at different times. Evidently, by repeating the delegation
game over time, with different utility functions for the topics considered in different time periods, 
the same bicriteria guarantees holds.

\noindent $\bullet$ {\em Weighted Voters}. In some electoral systems, different agents may have different voting powers.
That is, the voters are weighted. In this case, the number of votes a guru casts is simply the sum of the weights of all the
votes to which it was delegated. Our bicriteria guarantees then follow trivially.

\noindent $\bullet$ {\em Multiple Delegates}.
In some settings it may be the case that an agent is allowed to nominate more than one delegate.
That are two natural models for this. First, an agent nominates multiple delegates but the mechanism
can use only one of them. Second, an agent splits its voting weight up and assigns it to multiple delegates.
In both cases our techniques can be adapted and applied to give the same performance guarantees.

\subsection{Computational Aspects}

Let us conclude by discussing computation aspects in the liquid democracy game.
Recall, to find an $\epsilon$-Nash equilibrium with a provably optimal social welfare guarantee
we solved a fixed point theorem. 
Moreover, solving the fixed point theorem requires knowledge of the optimal set $D^*$ of gurus.
But obtaining $D^*$ is equivalent to finding an optimal solution to the liquid democracy game 
and this problem is hard.
\begin{thm}\label{thm:hardness}
It is NP-hard to find an optimal solution to the liquid democracy game.
\end{thm}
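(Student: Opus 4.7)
The plan is to reformulate the optimization problem combinatorially using Lemma~\ref{lem: OptLD} and then reduce from the classical \textsc{Dominating Set} problem. By Lemma~\ref{lem: OptLD}, an optimal delegation graph is a disjoint union of jolly stars, so $\Opt$ is realized by choosing a set $D^* \subseteq V$ of gurus (the centres of the stars) and letting each non-guru delegate to its best guru (or abstain if that would yield $0$). This reduces the problem to maximizing the purely combinatorial set function $f(D^*) = \sum_{j \in D^*} u_{jj} + \sum_{i \notin D^*} \max_{j \in D^*} u_{ij}$ over $D^* \subseteq V$.

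For the reduction, given an instance $(G, k)$ of \textsc{Dominating Set} with $G = (V, E)$ and $|V| = n$, I would construct a liquid democracy instance on agent set $V$ with $u_{vv} = 1 - \delta$ for a small $\delta > 0$ (say $\delta = 1/(n+1)$), $u_{vw} = 1$ whenever $\{v, w\} \in E$, and all other utilities equal to $0$. Letting $c(V')$ denote the number of vertices that lie in $V'$ or have a neighbour in $V'$, a one-line computation gives $f(V') = c(V') - \delta |V'|$. Hence if $V'$ dominates $G$ then $f(V') = n - \delta |V'|$, while if $V'$ fails to dominate $G$ then $c(V') \leq n - 1$ and so $f(V') \leq n - 1$.

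Choosing $\delta < 1/n$ ensures $n - \delta |V'| > n - 1$ for every $V' \subseteq V$, so the optimum is always attained on a dominating set, and in fact on a \emph{minimum} dominating set; writing $d$ for the domination number of $G$, this yields $\Opt = n - \delta \cdot d$. Consequently $d \leq k$ if and only if $\Opt \geq n - \delta k$, completing the reduction and establishing NP-hardness. The main conceptual obstacle is that the liquid democracy game has no explicit budget on $|D^*|$, so the ``one unit cost per guru'' present in \textsc{Dominating Set} must be encoded implicitly through the self-utility gap $\delta$; verifying that this single parameter can be tuned to (i) strictly rank every dominating set above every non-dominating set and (ii) rank dominating sets by cardinality, is the crux of the argument.
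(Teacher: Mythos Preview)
Your reduction is correct, but the paper's own proof is simpler. The paper reduces from \textsc{Dominating Set in a directed graph} and sets $u_{ij}=1$ whenever $(i,j)\in A$ and $u_{ij}=0$ otherwise; in particular $u_{ii}=0$. With this choice the cost of being a guru is exactly~$1$, so the optimal welfare equals $n-|D^*|$ whenever $D^*$ is the guru set of an optimal jolly-star solution, and one gets the clean equivalence ``$G$ has a dominating set of size $\le k$ iff $\Opt\ge n-k$'' with no auxiliary parameter. Your version instead keeps $u_{vv}=1-\delta>0$ and reduces from the undirected problem, which forces you to introduce and tune $\delta$ so that (i) every dominating set beats every non-dominating set and (ii) dominating sets are ranked by cardinality; this is correct and the threshold $n-\delta k$ works, but the $\delta$-gadget is doing exactly the job that the paper achieves for free by taking $u_{ii}=0$. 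The upside of your variant is that it shows hardness persists even when every agent has strictly positive self-utility, which is arguably the more natural regime for liquid democracy; the upside of the paper's variant is brevity and an integral decision threshold.
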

\begin{proof}
We apply a reduction from {\em dominating set in a directed graph}. Given a directed 
graph $G=(V,A)$ and an integer $k$: is there a set $S\subseteq V$ of cardinality $k$ such that,
for each $i\notin S$, there exists an arc $(i,j)\in A$ for some $j\in S$. 
This problem is NP-complete. We now give a reduction to the liquid democracy game.

Given the directed graph $G=(V,A)$ let the set of agents in the game be $V$. Define the
utility function of an agent by:
$$
u_{ij} = 
\begin{cases}
1 &\text{if } (i,j)\in A  \\ 
0 & \text{otherwise}	
\end{cases} 
$$ 
It immediately follows that there is a dominating set of cardinality at most $k$ if and only if the
liquid democracy game has a solution with social welfare at least $n-k$.
This completes the proof.
\qed
\end{proof}

So is it possible for the agents to compute a good bicriteria solution in polynomial time? 
If we allow for sub-optimal approximation guarantees then this is achievable.
The idea is simple: the characteristics required of the agents to ensure 
reasonable performance guarantees are {\em narcissism} and {\em avarice}.\footnote{This is not an unreasonable assumption for
both pirates and many of the inhabitants of Wonderland!}
First, since $D^*$ is unknown, each agent $i$ narcissistically assumes he himself is an optimal guru.
Consequently, he will vote with probability $p$. Thus, he will delegate his vote 
with probability $(1-p)$. This he will do avariciously, by greedily delegating to the agent $i^*$
that gives him largest myopic utility, that is $i^*=\argmax_{j\in V} u_{ij}$.

\begin{thm}\label{thm:approx}
For any $\epsilon\in[\frac{3}{4},1]$, the narcissistic-avaricious algorithm is linear time and produces an
$\epsilon$-Nash equilibrium with social welfare at least $(1-\epsilon)\cdot \Opt$.
\end{thm}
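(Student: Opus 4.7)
The plan is to analyze the mixed strategy profile $\x$ produced by the algorithm, where each agent $i$ sets $x_{ii}=p$ and places the remaining mass $1-p$ on her favourite delegate $i^{*}=\argmax_{j\in V} u_{ij}$. I will take $p=\tfrac{1}{2}$, the value that simultaneously optimises both the incentive slack and the welfare ratio; linear time is immediate since each $i^{*}$ is found in a single pass through agent $i$'s utility vector. Everything else rests on two elementary lower bounds on $u_i(\x)$: self-voting contributes $p\cdot u_{ii}$, and delegating to $i^{*}$ who then self-votes (using her own $x_{i^{*}i^{*}}=p$) contributes $(1-p)p\cdot u_{i,i^{*}}$.

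For the $\epsilon$-Nash property I will bound any best response by $M_i:=\max_{j\in V}u_{ij}$, using $u_i(\hat{\x}_i,\x_{-i})=\sum_j\mathbb{P}(g(i)=j)\cdot u_{ij}\leq M_i$. Either $M_i=u_{ii}$, and the first bound gives $u_i(\x)\geq p\cdot M_i$, or $M_i=u_{i,i^{*}}$ by the definition of $i^{*}$, and the second bound gives $u_i(\x)\geq (1-p)p\cdot M_i$. Plugging in $p=\tfrac{1}{2}$ yields $u_i(\x)\geq \tfrac{1}{4}M_i\geq (1-\epsilon)\cdot u_i(\hat{\x}_i,\x_{-i})$ whenever $\epsilon\geq \tfrac{3}{4}$, which is exactly where the hypothesis comes from: $\max_{p\in[0,1]}\min\{p,(1-p)p\}=\tfrac14$, achieved uniquely at $p=\tfrac12$.

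For the welfare guarantee I will invoke Lemma~\ref{lem: OptLD} to decompose $\Opt=\sum_{j\in D^{*}}\bigl(u_{jj}+\sum_{i\in L_j}u_{ij}\bigr)$ into jolly stars. For each guru $j\in D^{*}$ the first bound gives $u_j(\x)\geq p\cdot u_{jj}$. For each leaf $i\in L_j$, since $j\neq i$ is a competitor in the argmax, $u_{i,i^{*}}\geq u_{i,j}$, so the second bound gives $u_i(\x)\geq (1-p)p\cdot u_{ij}$. Summing over all stars produces $\SW(\x)\geq \min\{p,(1-p)p\}\cdot \Opt=\tfrac14\Opt\geq (1-\epsilon)\Opt$. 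There is no real obstacle here: once one notices that the single quantity $\min\{p,(1-p)p\}$ governs both the incentive slack and the welfare ratio, the same choice $p=\tfrac12$ optimises both and produces the same constant $\tfrac14$, pinning down the hypothesis $\epsilon\geq\tfrac34$ exactly. The only minor care needed is the degenerate case $i^{*}=i$, in which the strategy collapses to $x_{ii}=1$ and both displayed bounds remain valid (in fact strengthen).
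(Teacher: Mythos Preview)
Your argument is correct and rests on the same core inequality as the paper's proof, namely $u_i(\x)\ge p(1-p)\,u_{i,i^{*}}$ together with $u_{i,i^{*}}\ge u_i(\hat\x_i,\x_{-i})$ for every deviation. Two points of difference are worth noting. First, the paper lets $p$ depend on $\epsilon$, taking $p=\tfrac12\bigl(1+\sqrt{1-4(1-\epsilon)}\bigr)$ so that $p(1-p)=1-\epsilon$ exactly; you instead fix $p=\tfrac12$, obtain a single $\tfrac34$-Nash equilibrium with welfare $\tfrac14\,\Opt$, and then observe this profile already meets the weaker targets for every $\epsilon\ge\tfrac34$. Both are valid readings of the statement, though the paper's version produces a different profile for each $\epsilon$. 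Second, for the welfare bound the paper uses the one-line observation $\Opt\le\sum_{i\in V}u_{i,i^{*}}$ (each agent's utility in any profile is at most her favourite $u_{i,i^{*}}$), which avoids the detour through Lemma~\ref{lem: OptLD}; your star decomposition works but is more machinery than needed. Finally, your case split on whether $M_i=u_{ii}$ or $M_i=u_{i,i^{*}}$ is harmless but redundant: since $(1-p)p\le p$ for all $p\in[0,1]$, one always has $\min\{p,(1-p)p\}=(1-p)p$, so the second bound alone suffices.
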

\begin{proof}
Consider the incentive guarantee for agent $i$ for the strategy profile $\z$ induced by the algorithm.
As $i$ votes with probability $z_{ii}=p$ and delegates to $i^*$ with probability $z_{ii^*}=(1-p)$,
his utility is 
\begin{align*}
		u_i(\z_i,\z_{-i}) &\ \geq\ z_{ii}u_{ii} + z_{i i^*} z_{i^* i^*}  u_{ii^*} \\
		&\ =\ p\cdot u_{ii} + (1-p) \cdot p\cdot u_{ii^*} \\
		&\ \geq\ (1-p)\cdot p\cdot u_{ii^*} \\
		&\ \geq\ (1-p)\cdot p \cdot u_i(\hat{\z_i},\z_{-i}) \qquad \forall \hat{\z_i}\in \mathbb{S}_i
\end{align*}
Hence $\z$ is an $\epsilon$-Nash equilibrium for 
$$\epsilon=(1-p(1-p))=1-p+p^2=\frac34 +(p-\frac12)^2$$
It follows that the narcissistic-avaricious algorithm can provide incentive guarantees only for $\epsilon\in[\frac{3}{4},1]$.
Further, by solving the corresponding quadratic equation, to obtain such an $\epsilon$-Nash equilibrium we simply 
select $p=\frac12\left(1+\sqrt{1-4(1-\epsilon)}\right)$.

Now, let's evaluate the social welfare guarantee for the narcissistic-avaricious algorithm.	
As above, we have
 \begin{align*}
	 	\SW(\z) &\ =\ \sum_{i\in V} u_i(\z_i,\z_{-i})\\
		&\ \geq\ \sum_{i\in V} (1-p)\cdot p\cdot u_{ii^*} \\
	 	&\ =\  (1-p)\cdot p\cdot\sum_{i\in V} u_{ii^*}\\
		&\ \geq\ (1-p)\cdot p\cdot \Opt 
 \end{align*}
 Since $p=\frac12\left(1+\sqrt{1-4(1-\epsilon)}\right)$ this gives
 $$\SW(\z) \ \ge\ \frac14\Big(1-(1-4(1-\epsilon)\Big)\cdot \Opt \ = \ (1-\epsilon)\cdot \Opt$$
Therefore, as claimed, the narcissistic-avaricious algorithm outputs a solution whose welfare is at least $(1-\epsilon)$
times the optimal welfare. 

Finally, observe that implementing the narcissistic-avaricious strategy requires that each agent $i$ simply 
computes $i^*=\argmax_{j\in V} u_{ij}$. This can be done for every agent in linear time in the size of the input.
\qed
\end{proof}

We emphasize two points concerning Theorem~\ref{thm:approx}. One, it only works for weaker incentive guarantees, namely 
$\epsilon\in[\frac{3}{4},1]$. Unlike the fixed point algorithm it does not work for the range $\epsilon\in (0,\frac34)$.
Two, the social welfare guarantee is $(1-\epsilon)$. This is a constant but, for the valid range $\epsilon\in[\frac{3}{4},1]$, it is much worse 
than the $\epsilon$ guarantee obtained by the fixed point algorithm. 
A very interesting open problem is to find a polynomial time algorithm that matches the optimal bicriteria
guarantees provided by Theorem~\ref{thm:bicriteria} and applies for all $\epsilon>0$.

\vspace*{0.5cm}

\noindent \textbf{Acknowledgements}: We thank Bundit Laekhanukit for interesting discussions. We are also grateful to the reviewers for suggestions which greatly improved the paper.

\bibliographystyle{splncs04}
\bibliography{LD}

\end{document}